\documentclass[conference]{IEEEtran}
\IEEEoverridecommandlockouts
% The preceding line is only needed to identify funding in the first footnote. If that is unneeded, please comment it out.

\usepackage{amsmath,amssymb,amsfonts}
\usepackage{algorithmic}
\usepackage{graphicx}
\usepackage{textcomp}
\usepackage{xcolor}
\def\BibTeX{{\rm B\kern-.05em{\sc i\kern-.025em b}\kern-.08em
    T\kern-.1667em\lower.7ex\hbox{E}\kern-.125emX}}
\usepackage[colorlinks,citecolor=blue,urlcolor=blue,filecolor=blue]{hyperref}

\usepackage{subcaption}
\usepackage{booktabs}   

\usepackage{amsthm}

\newtheorem{theorem}{Theorem}

\bibliographystyle{plain}

\begin{document}

\title{Truly Bayesian Entropy Estimation
\thanks{I.K.\ was supported in part by
the Hellenic Foundation for Research and Innovation (H.F.R.I.) under 
the ``First Call for H.F.R.I.
Research Projects to support Faculty members and Researchers and the 
procurement of high-cost
research equipment grant, project number 1034.''}
}

\author{\IEEEauthorblockN{Ioannis Papageorgiou}
\IEEEauthorblockA{\textit{University of Cambridge} \\
\texttt{ip307@cam.ac.uk}}
\and
\IEEEauthorblockN{Ioannis Kontoyiannis}
\IEEEauthorblockA{\textit{University of Cambridge} \\
\texttt{yiannis@maths.cam.ac.uk}}
}

\maketitle

\begin{abstract}
Estimating the entropy rate of discrete time series is a challenging 
problem with important applications in numerous areas including
neuroscience, genomics, image processing and natural language 
processing. A number of approaches have been developed for 
this task, typically based either on universal data compression algorithms,
or on statistical estimators of the underlying process distribution.
% including the plug-in estimator, Lempel-Ziv string-matching estimators, 
% and context-tree weighting. 
In this work, we propose a fully-Bayesian approach for entropy estimation. 
Building on the recently introduced Bayesian Context Trees (BCT) 
framework for modelling discrete time series as variable-memory 
Markov chains, we show that it is possible to sample directly from 
the induced posterior on the entropy rate. This can be 
used to estimate the entire posterior distribution,
providing much richer information than 
point estimates. We develop theoretical results for 
the posterior distribution of the entropy rate, 
including proofs of consistency and asymptotic normality. 
The practical utility of the method is illustrated on both
simulated and real-world data, where it is found to outperform  
state-of-the-art alternatives. 
\end{abstract}

\smallskip

\begin{IEEEkeywords}
Entropy estimation, Bayesian context trees, 
context-tree weighting, entropy rate, neuroscience
\end{IEEEkeywords}

% {\em A full version of this paper is available
% on arXiv} \cite{branch_arxiv}.

%\vspace{-0.18 cm}

\section{Introduction}

The task of estimating the entropy rate of a discrete-valued stochastic process from empirical data is an important and challenging problem which dates back to the original work of Shannon on the entropy of English text~\cite{shannon1951prediction}. Since then it has received a lot of attention, perhaps
most notably
in connection with neuroscience \cite{archer2013bayesian,strong1998entropy,london2002information,nemenman2004entropy,paninski2003estimation,bhumbra2004measuring}, where the entropy rate 
plays a crucial role in efforts to
describe and quantify the amount of information communicated
between neurons. Other important areas of applications include genomics~\cite{dna1}, image processing~\cite{hero1999asymptotic}, and web navigation~\cite{levene2003computing}. 

Throughout the years, a very wide variety of approaches have been developed for this task. A recent extensive review of the relevant literature is given in~\cite{verdu2019empirical}. Perhaps the simplest method is the plug-in estimator, which uses the empirical probability estimates 
of blocks to approximate the entropy rate. However, this approach is well-known to perform poorly in practice, mainly 
because of the undersampling problem, i.e., the difficulty in obtaining accurate probability estimates for large block-lengths~\cite{paninski2004estimating,verdu2019empirical,nemenman2004entropy,gao2008estimating}.
A different family of estimators is based on string-matching and the fundamental results of Lempel-Ziv (LZ)~\cite{ziv1977universal,wyner1989some} for the relation of match-lengths with the entropy rate; 
the most effective such estimators are described 
in~\cite{kontoyiannis1998nonparametric} and~\cite{gao2008estimating}.

\newpage 

In view of the Shannon-McMillan-Breiman theorem~\cite{cover1999elements}, it is easy to see that, for a string $x_1 ^n=(x_1,\ldots,x_n)$, any consistent probability estimate $\widehat p_n (x_1 ^n)$ leads to an estimate for the entropy rate as $\widehat H  = -(1/n) \log \widehat p_n (x_1 ^n) $. In this setting, the adaptive probability assignments of universal compressors like 
the context-tree weighting (CTW) algorithm~\cite{ctw} and 
prediction by partial matching (PPM)~\cite{cleary1984data} 
are natural choices for the probability estimates to be employed. 
Finally, a different approach is proposed in~\cite{cai2004universal}, 
based on block sorting~\cite{bwt}.
[Throughout the paper, $\log$ denotes the natural
logarithm.]

%Finally, a different approach is proposed in~\citep{cai2004universal}, 
%based on block sorting~\cite{bwt}.

The first contribution of this work is the introduction of a new,
fully-Bayesian approach for estimating the entropy rate,
building on the recently introduced Bayesian Context Trees (BCT) 
framework for discrete time series~\cite{our}.
The BCT framework is a Bayesian modelling framework for the 
class of variable-memory Markov chains, which has been found 
to be very effective in a range of statistical tasks 
for discrete time series, including model selection and 
prediction~\cite{our,ourisit} as well as change-point detection~\cite{changepoint,lungu2022bayesian}. Context-trees in data compression were also recently studied in~\cite{matsushima2009reducing,nakahara2022probability}. 
% The introduction of a prior structure for context-tree models and 
% their parameters implicitly also defines a prior for the entropy rate.

Our second contribution is the introduction of a Monte Carlo algorithm
that produces independent and identically distributed~(i.i.d.) samples
from the induced posterior on the entropy rate.
This facilitates fully-Bayesian estimation:
The Monte Carlo samples can be used to estimate the entire posterior 
distribution of the entropy rate, which contains significantly richer
information than simple point estimates. 

Our third contribution is in stating and proving
provide theoretical results for the asymptotic behaviour 
of this posterior distribution, giving
strong theoretical justifications for the use of our methods. 
We show that the posterior asymptotically concentrates
on the true underlying value of the entropy with probability 1,
and that it is asymptotically normal.

Finally, we present experimental results on both
simulated and real-world data, illustrating
the performance of our methods in practice. 
We provide estimates of the entire posterior of the entropy rate,
as well as specific point estimates
with associated `Bayesian confidence intervals' or
`credible regions'.
Compared with previous approaches,
our proposed point estimates outperform 
state-of-the-art alternatives. 

% In Section~\ref{previous} we provide a brief description of the most successful earlier approaches for estimating the entropy rate. In Section~\ref{bct} we give a brief summary of the BCT framework. In Section~\ref{entropy} we describe our methodology for obtaining i.i.d.\ samples from the entropy rate posterior, and provide theoretical results for its asymptotic behaviour. Finally, in Section~\ref{results} we carry out experiments and compare with previous approaches. 

\section{Background: Entropy Estimation} \label{previous}

The {\em entropy rate} $\bar{H}$ 
of a process $\{X_n \}$ on a finite alphabet
is $\bar{H} = \lim _ {n \to \infty} (1/n)H (X_1^n)$,
whenever the limit exists, where $H(X_1^n)$ denotes the
usual Shannon entropy of the discrete random vector
$X_1^n=(X_1,\ldots,X_n)$. 

% \medskip

\newpage

\noindent
{\bf Plug-in estimator. } 
Motivated by the definition of the entropy rate,
the simplest and one of the most commonly used
estimators of the entropy rate is the per-sample
entropy of the empirical distribution of $k$-blocks.
Letting $\widehat p_k (y_1 ^k ) $, $y_1^k\in A^k$,
denote the empirical distribution of $k$-blocks
induced by the data on $A^k$, the {\em plug-in}
or {\em maximum-likelihood} estimator is simply,
$\widehat H_ k = (1/k) H (\widehat p_k)$.
The main advantage of this estimator
is its simplicity. Well-known drawbacks
include its high variance due to undersampling,
and the difficulty
in choosing appropriate block-lengths $k$
effectively. Another limitation is its always negative bias, 
with a number of different approaches developed to 
correct for it; 
see~\cite{verdu2019empirical,gao2008estimating,paninski2003estimation} 
for more details.  

\medskip

\noindent
{\bf Lempel-Ziv estimators. } 
Among the numerous match-length-based entropy estimators
that have been derived from the Lempel-Ziv family of 
data compression algorithms, the
increasing-window estimator of \cite{gao2008estimating}, has been
identified as the most effective one,
and it will be used in our comparisons.

\medskip

\noindent
{\bf Naive CTW estimator. } This uses the marginalised probability estimate
$P_{\rm CTW}(x_1^n)$ computed by the CTW algorithm,
to define $\widehat H _ { \text {CTW}} = -(1/n)\log P_{\rm CTW}(x_1^n)$. 
This estimator was found 
in \cite{gao2008estimating,verdu2019empirical} to 
achieve good performance in practice. 
Its consistency and asymptotic normality follow easily from standard
results, and its (always positive) bias is of $O((\log n)/n)$.

\medskip

\noindent
{\bf PPM estimator. } Using a different
adaptive probability assignment, $Q(x_1^n)$, this method
forms the entropy estimate
$\widehat H _ { \text {PPM}} = -(1/n)\log Q(x_1^n)$,
where prediction by partial matching is used to fit the model 
that leads to~$Q(x_1^n)$. 
Here we use the interpolated smoothing variant of PPM 
introduced by \cite{bunton1996line}.

\medskip

\section{Bayesian Context Trees} \label{previous_bct}

The BCT model class consists of
variable-memory Markov chains, where
the memory length of the process may
depend on the
most recently observed symbols. 
Variable-memory chains 
admit natural representations as context trees. 
Let $\{X_n\}$ 
be a $d$th order Markov chain,
with values in the alphabet 
$\mbox{$A=\{0,1,\ldots,m-1\}$}$.
The {\em model} describing $\{X_n\}$ 
as a variable-memory chain is represented by
a proper $m$-ary tree $T$ as the one shown
below,
where $T$ is {\em proper} if any node in $T$ 
that is not a leaf has exactly $m$ children. 
At each leaf $s$ there is an associated parameter 
$\theta_s$ given by a probability vector,
$\theta_s=(\theta_s(0),\ldots,\theta_{s}(m-1))$.
The conditional distribution of $X_n$
given the past~$d$ observations $(x_{n-1}, \ldots , x_{n-d})$, is 
given by the vector $\theta_s$ associated to the unique leaf $s$ of $T$ 
that is a suffix of $(x_{n-1}, \ldots , x_{n-d})$. 

\medskip

\centerline{\includegraphics[width= 0.7 \linewidth]{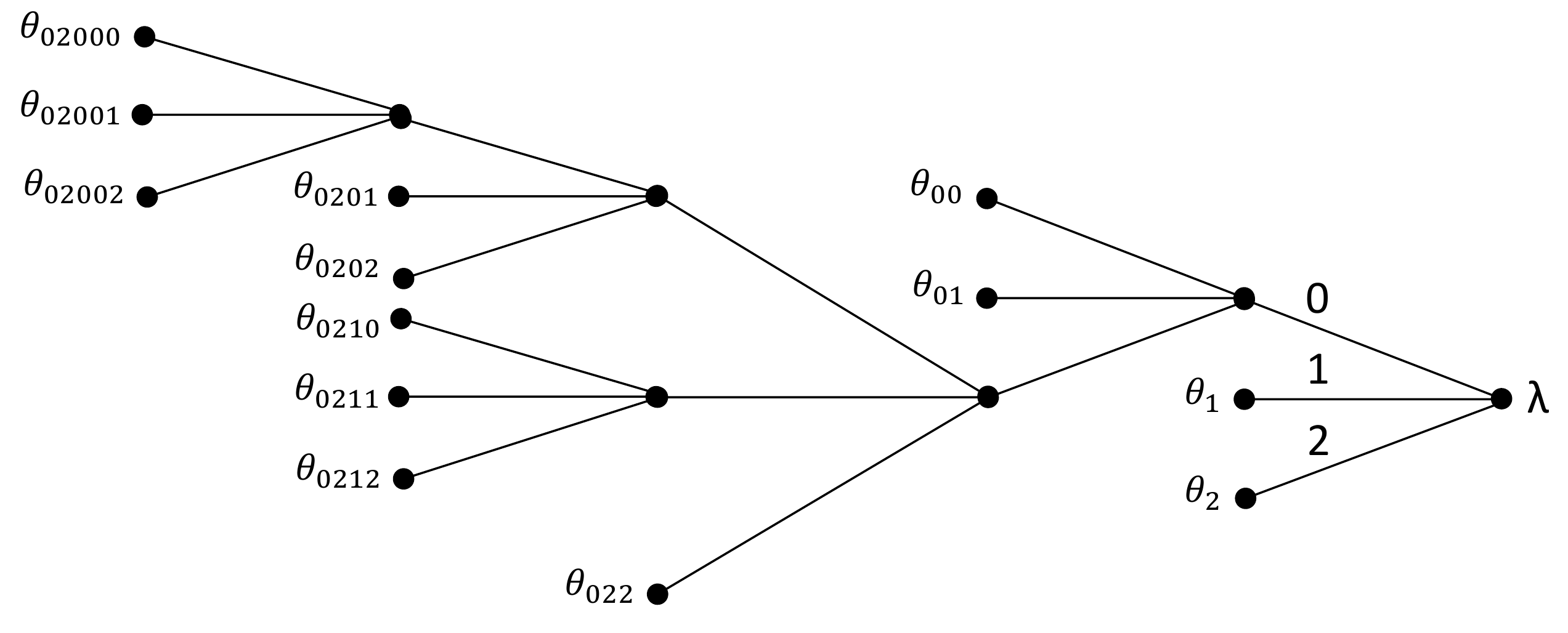}}

\noindent
{\bf Model prior. } 
Let $\mathcal {T} (D) $ denote the 
collection of all proper $m$-ary trees 
with depth no greater than $D$.
Following \cite{our}, we use 
the following prior distribution 
on $\mathcal {T} (D)$,
\begin{equation}
\label{prior_T}
\pi(T)=\pi_D(T;\beta)=\alpha^{|T|-1}\beta^{|T|-L_D(T)} \ ,
\end{equation}
where $\beta\in(0,1)$ is a hyperparameter, 
$\alpha=(1-\beta)^{1/(m-1)}$,
$|T|$ is the number of leaves of $T$,
and $L_D(T)$ is the number of leaves of 
$T$ at depth $D$.
We adopt the default value of
$\beta = 1 - 2 ^ {-m+1}$~\cite{our}.

\medskip

\noindent
{\bf Prior on parameters. } Given a model $T \in \mathcal{T} (D)$, an independent Dirichlet prior
with parameters $(1/2,1/2,\ldots,1/2)$ is placed
on each $\theta_s$, $s\in T$.

\medskip

\noindent
{\bf CTW: The context tree weighting algorithm. }
Given observations $x_1^n$ with an initial context
$x_{-D+1}^0$:

1) Build the tree $T_{\text{MAX}}$, 
which is the smallest proper tree that contains
all the contexts $x_{i-D+1} ^ {i}, \  i=1,2,\ldots,n$,
as leaves. 

2) Compute the {\em estimated probabilities}
$P_{e,s}$,
\begin{equation} 
P_{e,s} = 
\frac{\prod_{j=0}^{m-1} 
[(1/2)(3/2)\cdots 
(a_s(j)- 1/2)]}
{(m/2)(m/2+1) 
\cdots (m/2+M_s-1)},
\label{eq:Pe}
\end{equation} 
for each node $s$ of $T_{\text{MAX}}$;
the \textit{count vectors}
$a_s=(a_s(0),a_s(1),\ldots,a_s(m-1))$
are given by
$a_s(j)= \#$  times symbol $j\in A$ follows 
context~$s$ in $x_1^n$,
and $M_s=a_s(0)+a_s(1)+\cdots+a_s(m-1)$.

3) Starting at the leaves and proceeding recursively towards the root, for each node~$s$ of $T_{\text{MAX}}$ compute the \textit{weighted probabilities} $P_{w,s}$, given by, 
\begin{equation} 
\label{pws}
P_{w,s}\!=\!
\left\{
\begin{array}{ll}
P_{e,s},   &\mbox{$s$ is a leaf,}\\
\beta  P_{e,s}+(1-\beta)  \prod_{j=0}^{m-1} P_{w,sj},  &\mbox{otherwise.}
\end{array}\!\!
\right.\!\!
\end{equation} 

\noindent
{\bf Prior predictive likelihood. }
As shown in \cite{our}, 
the weighted probability at the root $P_{w,\lambda}$ 
is equal to the {\em prior predictive likelihood}, 
namely, the probability of $x_1^n$ averaged over
both models and parameters:
\begin{equation}
P(x)=\sum_{T\in\mathcal {T}(D)}
\int_\theta P(x|T,\theta)\pi(\theta|T)\pi(T)d\theta,
\label{eq:ppl}
\end{equation}
where we omit the sub-/super-scripts of $x=x_{-D+1}^n$ for 
clarity.

\medskip

\noindent
{\bf Sampling from the joint posterior on $(T, \theta)$. }
The following procedure~\cite{branch_arxiv,branch_isit}
can be followed to produce i.i.d.\ 
samples $T ^ {(i)}$ from the model posterior $\pi(T|x)$. 
For any context~$s$, 
let $P_{b,s} = \beta P_{e,s} / P_{w,s}$, 
and $\lambda$ be the root node.

Let $T=\{\lambda\}$; if $D=0$, stop; 
if $D>0$, then, with probability $P_{b,\lambda}$, 
mark the root as a leaf and stop,
or, with probability $(1-P_{b,\lambda})$,
add all~$m$ children of $\lambda$ at depth $1$ to~$T$. 
If $D=1$, stop;
otherwise, examine each of
the $m$ new nodes and either
mark a node $s$ as a leaf with probability $P_{b,s}$,
or add all~$m$ of its children to $T$ with probability
$(1-P_{b,s})$, independently from node to node. 
Proceeding recursively, at each
step examining all non-leaf nodes at depths strictly
smaller than $D$ until no more eligible nodes remain,
produces a random tree $T \in \mathcal {T} (D)$, 
with probability $\pi(T|x)$.

Next, since the
full conditional density of 
the parameters is $\pi(\theta |T,x) = \prod _ {s \in T} \text{Dir}
\left (1/2 + a_s (0), \ldots,1/2 + a_s (m-1) \right ),$ 
for each $T^{(i)}$
we can draw a conditionally independent sample 
$ \theta ^ {(i)}$ from $ \pi ( \theta | T ^ {(i)},x  )$,
producing a sequence of
exact i.i.d.\ samples $(T ^ {(i)}, \theta ^ {(i)})$ from 
the joint posterior $\pi(T,\theta|x)$. 

%\medskip

\section{The Bayesian Entropy Estimator} \label{entropy}

Our starting point is the following observation.
Suppose $\{X_n\}$ is an ergodic 
variable-memory chain with model $T$
and parameters 
$\theta=\{\theta_s;s\in T\}$. Viewing~$\{X_n\}$
as a full $D$th order chain, its entropy rate can be 
written in terms of its
transition probabilities and its unique stationary
distribution~\cite{cover1999elements}. 
That is,
$\bar{H}$ can be expressed as an explicit 
function $\bar{H}=H(T ,\theta)$.
Therefore, given a time series $x$,
using the sampler of the previous section 
to produce samples $(T^{(i)},\theta^{(i)})$
from $\pi(T,\theta|x)$, we can obtain i.i.d.
samples $H^{(i)}=H(T^{(i)},\theta^{(i)})$ from the
posterior $\pi(\bar{H}|x)$ of the entropy rate.

The calculation of each 
$H^{(i)}=H(T^{(i)},\theta^{(i)})$ is straightforward
and only requires the computation of the stationary
distribution $\pi$ of the induced first-order chain that
corresponds to taking blocks of size [depth$(T^{(i)})+1$].
The only potential difficulty is if either the
depth of $T^{(i)}$ or the alphabet size $m$ 
are so large that the computation of $\pi$ becomes
computationally expensive.
In such cases, $H^{(i)}$
can be computed approximately by including an 
additional Monte Carlo step: Generate a sufficiently
long random sample $Y_{-D+1}^M$ from the chain 
$(T^{(i)},\theta^{(i)})$, 
and calculate:
\begin{equation}
H^{(i)}\approx-\frac{1}{M}\log P(Y_1^M|Y_{-D+1}^0, T^{(i)},\theta^{(i)}).
\label{eq:MCMCMC}
\end{equation}
The ergodic theorem and the central limit theorem
for Markov chains 
\cite{chung1967markov,meyn2012markov}
then guarantee the accuracy
of~(\ref{eq:MCMCMC}).

%\subsection{Sampling from the joint posterior on $(T, \theta)$}

%\subsection{Sampling from the entropy rate posterior}

% The above procedure can be used to obtain i.i.d.\ samples from the entropy rate posterior. 
% In this section we give theoretical results for the asymptotic behaviour of this posterior, which provide some justifications for its use in estimation. 

Our first theoretical result
shows that the posterior of the entropy rate
asymptotically
concentrates on the true underlying value.
Theorem~\ref{thm2} shows that it is asymptotically normal. 
In order to avoid inessential technicalities, 
we assume throughout that the data-generating
process $\{ X_n \}$ is a {\em positive ergodic} 
variable-memory chain with memory no greater than~$D$,
i.e., that all the parameters $\theta _s (j)$ are strictly positive.

\begin{theorem}
\label{thm1} 
Let $X_{-D+1}^n$ be
generated
by a positive-ergodic, variable-memory chain 
$\{ X_n \}$ 
with model $T^*\in\mathcal {T}(D)$, parameters $\theta ^ *$, and entropy rate $H ^ * = H(T^*, \theta ^ *)$; let $\beta$ be arbitrary. The posterior distribution of the entropy rate $\pi( \bar H | X_{-D+1}^n) $ concentrates around the true value $H ^ *$, i.e., 
\begin{equation}
    \pi(\cdot | X_{-D+1}^n) \xrightarrow[]{\mathcal {D}} \delta _ {H ^*} , \; \; \; \;  \text {a.s., as} \  n \to \infty, 
\end{equation}
\textit{%
where $\xrightarrow[]{\mathcal {D}}$ denotes weak convergence of probability measures, and $\delta _ {H ^*} $ is the unit mass at $H^*$. }
\end{theorem}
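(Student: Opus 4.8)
The plan is to reduce the claimed weak convergence to a degenerate limit to a statement about posterior tail probabilities, and then to control these using two independent consistency properties of the BCT posterior. Writing $x=X_{-D+1}^n$, since the limit $\delta_{H^*}$ is a point mass, the convergence $\pi(\cdot\mid x)\xrightarrow{\mathcal{D}}\delta_{H^*}$ is equivalent to showing that, for every fixed $\epsilon>0$, $\pi(|\bar H-H^*|>\epsilon\mid x)\to 0$ almost surely. Because the entropy-rate posterior is the pushforward of the joint posterior $\pi(T,\theta\mid x)$ under the deterministic map $(T,\theta)\mapsto H(T,\theta)$, and because the model space $\mathcal{T}(D)$ is \emph{finite}, I would expand this tail probability as
\[
\pi\bigl(|\bar H-H^*|>\epsilon\,\big|\,x\bigr)
=\sum_{T\in\mathcal{T}(D)}\pi(T\mid x)\,
\pi\bigl(|H(T,\theta)-H^*|>\epsilon\,\big|\,T,x\bigr),
\]
and bound the right-hand side by $\pi\bigl(|H(T^*,\theta)-H^*|>\epsilon\mid T^*,x\bigr)+\bigl(1-\pi(T^*\mid x)\bigr)$, using that each conditional factor is at most $1$. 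It then suffices to show that both terms vanish almost surely.

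The second term is handled by posterior model consistency in the BCT framework: for a positive-ergodic chain with true (minimal) model $T^*$, the posterior odds of $T^*$ against any competing $T\in\mathcal{T}(D)$ diverge almost surely, so that $\pi(T^*\mid x)\to1$ \cite{our,ourisit}. Intuitively, coarser or incompatible trees are separated from $T^*$ by a term growing linearly in $n$ in the log prior predictive likelihood (\ref{eq:ppl}), since they cannot match the true conditional distributions, whereas strictly finer trees are rejected by the logarithmic complexity penalty encoded in the prior (\ref{prior_T}); finiteness of $\mathcal{T}(D)$ makes this separation uniform.

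For the surviving $T=T^*$ term I would invoke parameter consistency together with continuity of the entropy-rate map. Conditional on $T^*$, the posterior factorises into independent Dirichlet laws $\mathrm{Dir}(1/2+a_s(0),\dots,1/2+a_s(m-1))$ over the leaves $s\in T^*$. By ergodicity each context $s$ is visited infinitely often, so $M_s\to\infty$ and the empirical frequencies $a_s(j)/M_s\to\theta^*_s(j)$ almost surely; hence the Dirichlet posterior concentrates at $\theta^*_s$ and $\theta\to\theta^*$ in posterior probability. Finally, $H(T^*,\cdot)$ is continuous at $\theta^*$: in the positive-ergodic regime the induced first-order chain on blocks of length $\mathrm{depth}(T^*)+1$ is irreducible, and its stationary distribution is a smooth (rational) function of the transition probabilities, hence of $\theta$. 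Combining concentration with continuity yields $\pi(|H(T^*,\theta)-H^*|>\epsilon\mid T^*,x)\to0$ almost surely.

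I expect the main obstacle to be the model-consistency step. If it cannot be quoted directly from \cite{our}, one must show that the log prior predictive likelihood (\ref{eq:ppl}) separates $T^*$ from every competitor — coarse or incompatible trees by an information-divergence gap linear in $n$, and strictly finer trees by the $O(\log n)$ complexity term — uniformly over the finite family $\mathcal{T}(D)$. A milder secondary point is that the Dirichlet posterior charges the whole simplex, including boundary parameters where the chain degenerates and $H$ may fail to be continuous; this is harmless, because posterior concentration near the interior point $\theta^*$ confines all but vanishing mass to a neighbourhood on which $H(T^*,\cdot)$ is continuous, so only continuity at $\theta^*$ is required.
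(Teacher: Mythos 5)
Your proof is correct, and at the level of ideas it follows the same route as the paper: reduce weak convergence of $\pi(\bar H\mid x)$ to $\delta_{H^*}$ to concentration of the joint posterior $\pi(T,\theta\mid x)$ at $(T^*,\theta^*)$, then push forward through the map $(T,\theta)\mapsto H(T,\theta)$. The difference is granularity: the paper's proof is a one-liner that quotes this joint concentration directly as Theorem~3.7 of \cite{bct_theory}, whereas you reprove the quoted ingredient from scratch --- model posterior consistency $\pi(T^*\mid x)\to 1$ (available as Theorem~3.6 of \cite{bct_theory}, the same result the paper invokes in its proof of Theorem~2, so your anticipated ``main obstacle'' dissolves) plus conditional Dirichlet concentration via the ergodic convergence of the counts $a_s(j)/M_s\to\theta^*_s(j)$. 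Your elaboration also makes explicit two points the paper leaves implicit: that the continuous mapping step only needs continuity of $H(T^*,\cdot)$ at the interior point $\theta^*$, and that the Dirichlet posterior's mass near the simplex boundary (where the stationary distribution, hence $H$, can be discontinuous) is asymptotically negligible --- both handled correctly. One minor slip in your heuristic aside on model consistency: for trees strictly finer than $T^*$, the $O(\log n)$ penalty that rejects them comes from the marginalisation over parameters (the Occam factor in the prior predictive likelihood (\ref{eq:ppl}), roughly $\frac{m-1}{2}\log n$ per extra leaf), not from the prior (\ref{prior_T}), whose contribution is a constant independent of $n$; since you only use this as intuition for a citable result, it does not affect the validity of the proof.
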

\begin{proof}
    This is an immediate consequence of the concentration of the joint posterior $\pi(T, \theta |X_{-D+1}^n) $ around the true values $(T ^*, \theta ^* )$, which is proven as Theorem 3.7 in~\cite{bct_theory}.
\end{proof}

\begin{theorem}
    \label{thm2}
Under the assumptions of Theorem~1, let $H ^ {(n)}$ be distributed
according to the posterior distribution $\pi( \bar H | X_{-D+1}^n) $, and let $ \bar \theta_{T^*}  ^ {(n)} $ denote the mean of the posterior of the parameters $\pi (\theta | T ^* ,X_{-D+1}^n )$. Then, as $n \to \infty$,
\begin{equation}
    \sqrt n \big ( H ^ {(n)} - \widetilde H ^{(n)} \big )  \xrightarrow[]{\mathcal {D}} Z \sim \mathcal {N} (0, \sigma _ H ^ 2) , \; \; \; \;  a.s.,
\end{equation}
where $\widetilde H ^{(n)} = H (T ^* ,  \bar \theta_{T^*}  ^ {(n)}) \to H ^*$, a.s. as $n \to \infty$. 
\end{theorem}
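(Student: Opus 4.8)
The plan is to prove a Bernstein--von Mises statement for the parameter posterior under the true model and then push it through the smooth map $\theta\mapsto H(T^*,\theta)$ by the delta method, after first discarding the contribution of the wrong models. All of the ergodic averages below converge on a single almost-sure event, on which the whole argument is run. \textbf{Step 1 (reduction to $T^*$).} By Theorem~\ref{thm1} and Theorem~3.7 of~\cite{bct_theory} the posterior mass on the generating model satisfies $\pi(T^*\mid X_{-D+1}^n)\to 1$ a.s. Writing the law of $\sqrt n(H^{(n)}-\widetilde H^{(n)})$ as the mixture
\[
\pi(\cdot\mid X_{-D+1}^n)=\pi(T^*\mid X^n)\,\mu_n^*+\big(1-\pi(T^*\mid X^n)\big)\,\nu_n,
\]
where $\mu_n^*$ is its conditional law given $T=T^*$ and $\nu_n$ collects all $T\neq T^*$, I would test against an arbitrary bounded continuous $f$: the $\nu_n$-term is bounded by $\|f\|_\infty\,(1-\pi(T^*\mid X^n))\to 0$, so the weak limit is governed entirely by $\mu_n^*$. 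It therefore suffices to establish asymptotic normality under the conditional posterior $\pi(\theta\mid T^*,X^n)$.

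\textbf{Step 2 (Bernstein--von Mises for the Dirichlet posterior).} Under $T^*$, conjugacy gives $\pi(\theta\mid T^*,X^n)=\prod_{s\in T^*}\mathrm{Dir}(1/2+a_s(0),\dots,1/2+a_s(m-1))$. The ergodic theorem yields $M_s/n\to\mu^*(s)$ a.s., the stationary probability of context $s$, so each Dirichlet block has concentration parameters growing linearly in $n$ and is asymptotically Gaussian. Hence, under the posterior, $\sqrt n\big(\theta-\bar\theta_{T^*}^{(n)}\big)\xrightarrow{\mathcal D}\mathcal N(0,\Sigma)$ a.s., where $\Sigma$ is block-diagonal with block $\mu^*(s)^{-1}\big(\mathrm{diag}(\theta_s^*)-\theta_s^*(\theta_s^*)^{\top}\big)$ for each leaf $s$. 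This limiting covariance is deterministic in the true parameters, matching the quenched (a.s.) nature of the claim.

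\textbf{Step 3 (delta method and centering).} For a positive-ergodic chain the stationary distribution is an analytic function of the transition probabilities, so $H(T^*,\cdot)$ is smooth on a neighbourhood of the interior point $\theta^*$ (it depends on $\theta$ through both the conditional entropies and the stationary weights). Setting $g=\nabla_\theta H(T^*,\theta^*)$ and Taylor expanding around $\bar\theta_{T^*}^{(n)}$,
\[
\sqrt n\big(H^{(n)}-\widetilde H^{(n)}\big)=g_n^{\top}\,\sqrt n\big(\theta-\bar\theta_{T^*}^{(n)}\big)+R_n,
\]
with $g_n=\nabla_\theta H(T^*,\bar\theta_{T^*}^{(n)})\to g$ and remainder $R_n=O_p\big(\sqrt n\,\|\theta-\bar\theta_{T^*}^{(n)}\|^2\big)=O_p(n^{-1/2})$. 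Combining with Step~2 and Slutsky gives the stated limit $\mathcal N(0,\sigma_H^2)$ with $\sigma_H^2=g^{\top}\Sigma g$. Finally $\widetilde H^{(n)}=H(T^*,\bar\theta_{T^*}^{(n)})\to H^*$ a.s., from posterior-mean consistency $\bar\theta_{T^*}^{(n)}\to\theta^*$ a.s. and continuity of $H$.

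The hard part will be making the delta method rigorous for the \emph{posterior} distribution rather than for a point estimator: one must verify that the quadratic remainder $R_n$, after the $\sqrt n$ scaling, does not contribute to the weak limit, which requires controlling the posterior tails where $\theta$ is far from $\theta^*$. This is handled by combining the posterior concentration of Step~2 (which confines $\theta$ to a shrinking neighbourhood of $\theta^*$ with overwhelming probability) with the boundedness of the second derivatives of $H(T^*,\cdot)$ on that neighbourhood; the remaining subtlety is simply to carry the single almost-sure event coherently through all three steps.
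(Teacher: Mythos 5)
Your proof is correct, and it shares the paper's three-ingredient skeleton (model-posterior concentration, a parameter-level CLT under $T^*$, delta method), but the execution is genuinely different in two ways. First, the paper formulates everything at the density level: it writes $\pi(\bar H\mid x)=\sum_T \pi(\bar H\mid T,x)\pi(T\mid x)$ and proves uniform-on-compacts convergence of the rescaled posterior density to $\phi_{\sigma_H}$, invoking the conditional CLT for \emph{every} model $T$ before killing the $T\neq T^*$ terms via $\pi(T^*\mid x)\to 1$ (Theorem 3.6 of~\cite{bct_theory}). Your Step~1 replaces this with the mixture-of-laws decomposition tested against bounded continuous $f$, which is weaker but exactly matches what the theorem asserts, and makes the per-$T$ CLTs for $T\neq T^*$ unnecessary since their total weight vanishes. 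Second, where the paper simply cites Theorem~3.8 of~\cite{bct_theory} for the parameter posterior CLT and~\cite{van2000asymptotic} for the delta method, you prove the Bernstein--von Mises ingredient directly from Dirichlet conjugacy plus the ergodic theorem ($M_s/n\to\mu^*(s)$), obtaining the explicit block-diagonal covariance $\Sigma$ with blocks $\mu^*(s)^{-1}\bigl(\mathrm{diag}(\theta_s^*)-\theta_s^*(\theta_s^*)^{\top}\bigr)$, and you carry out the Taylor expansion with explicit remainder control rather than citing the delta method off the shelf. This is arguably more careful than the paper on one point: the classical delta method applies to a sequence of estimators, whereas here it must be applied to the \emph{posterior} law conditionally on the data along a single a.s.\ event, and your tightness-plus-bounded-second-derivatives argument (confining $\theta$ to an $O(n^{-1/2})$ neighbourhood so that $\sqrt n\,R_n\to 0$ in posterior probability) is the honest justification the paper leaves implicit. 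Your limiting variance $\sigma_H^2=g^{\top}\Sigma g$ agrees with the paper's Remark~1 expression $\nabla_\theta H^{\rm T} J\,\nabla_\theta H$, and your smoothness claim for $H(T^*,\cdot)$ at the interior point $\theta^*$ matches the paper's appeal to perturbation theory for stationary distributions~\cite{schweitzer1968perturbation,golub1986using}. In short: same architecture, but your version is self-contained and proves exactly the weak-convergence statement claimed, while the paper's version leans on~\cite{bct_theory} and asserts a stronger (local, density-level) form of the limit.
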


\begin{proof}
    First,  denoting $x = X_{-D+1}^n $ for simplicity, for the entropy rate posterior density $\pi( \bar H | x)$ we can write,
    \begin{align} \label{post_exp}
      & \pi( \bar H | x) = \sum _ {T  \in \mathcal {T}(D)} \pi (\bar H | T,  x) \pi(T |x ) ,% \\
      %=  \ & \pi(\bar H | T^*,  x ) \pi(T ^* |x ) + \sum _ {T \neq T ^*} \pi (\bar H | T, x ) \pi(T |x)
    \end{align}
    where for every given $T$, the conditional density $ \pi (\bar H | T,  x)$ arises from the conditional density of the parameters $ \pi (\theta | T,  x)$ through the transformation of variables $\bar H = H (T,\theta)$.

    Denoting the posterior density $\pi _ {\bar H |x} (h)$ from~(\ref{post_exp}) as $f_{\bar H , n }(h)$, in order to prove the theorem it suffices to show that,
    \begin{equation} \label{density_prove}
        \frac{1}{\sqrt {n}} f_{\bar H , n }\left ( \frac {h}{\sqrt {n}} + \widetilde H ^ {(n)}\right ) \to \phi _ {\sigma _ H} (h) , \; \; \text{as} \  n \to \infty,
    \end{equation}
    uniformly on compact sets in $\mathbb R$, where $\phi _ \sigma (z)$ denotes the density of a zero mean Gaussian with variance $\sigma ^ 2$. 

    From the central limit theorem (CLT) on the parameters, proven as Theorem~3.8 in~\cite{bct_theory}, we have that, for $\theta_{T ^*} ^ {(n)}$ distributed according to the posterior $\pi (\theta | T ^* ,x)$, as $ n \to \infty$,
\begin{equation}
    \sqrt n \big (  \theta_{T ^*} ^ {(n)} - \bar \theta_{T^*} ^ {(n)} \big )  \xrightarrow[]{\mathcal {D}} Z \sim \mathcal {N} (0, J) , \; \; \; \;  a.s.,
\end{equation}
where the covariance matrix $J$ is given in~\cite{bct_theory}. Similarly, we can get for every $T$ that,
\begin{equation}
    \sqrt n \big (  \theta_{T }^ {(n)} - \bar \theta_{T} ^ {(n)}  \big )  \xrightarrow[]{\mathcal {D}} Z \sim \mathcal {N} (0, J _T) , \; \; \; \;  a.s.,
\end{equation}
where $\theta_{T} ^ {(n)}$ is distributed according to the posterior $\pi (\theta | T ,x)$ with mean $ \bar \theta_{T} ^ {(n)}$. Now, given $T$, $\bar H$ is just a function of the parameters $\theta$, so applying the delta method~\cite{van2000asymptotic} we get that,
\begin{equation} \label{h_clt}
    \sqrt n \big (  H_{T }^ {(n)} - \widetilde H _T ^{(n)} \big )  \xrightarrow[]{\mathcal {D}} Z \sim \mathcal {N} (0, \sigma _ T ^ 2) , \; \; \; \;  a.s.,
\end{equation}
where  $\widetilde H _T ^ {(n)}= H (T, \bar \theta_{T} ^ {(n)})$,  and $ H_{T }^ {(n)}$ is distributed according to $ \pi (\bar H | T,  x)$. We note that in order to use the delta method, implicitly we are using the differentiability of $H$, which reduces to the differentiability of the stationary distribution $\pi$ as a function of $\theta$; this condition is satisfied for a positive-ergodic chain, see, e.g.~\cite{golub1986using,schweitzer1968perturbation,funderlic1986sensitivity,meyer1975role} and the references therein for more details.

Considering the conditional density $f_{\bar H | T , n}$ of the random variables in~({\ref{h_clt}}), we get that, uniformly on compact sets,
\begin{equation} \label{h_density_t}
\frac{1}{\sqrt {n}} f_{\bar H | T, n }\left ( \frac {h}{\sqrt {n}} + \widetilde H _T ^ {(n)}\right ) \to \phi_{ \sigma_{ T } ^ {} } (h)  , \; \; \text{as} \  n \to \infty.
\end{equation}
Finally, from Theorem 3.6 in~\cite{bct_theory}, the model posterior concentrates on $T^*$, i.e., $\pi(T^*|x) \to 1$ and $\pi(T|x) \to 0$ for~$T \neq T^*$, a.s. as $n\to \infty$.
Combining this with~(\ref{h_density_t}) and~(\ref{post_exp}) gives~(\ref{density_prove}) with $\sigma _H = \sigma _ {T^*}$ and $\widetilde H ^{(n)} = \widetilde H _{T^ *} ^ {(n)}$, completing the proof.
\end{proof}

\noindent
\textbf{Remarks.} 1. The delta method also gives an expression for the asymptotic variance $\sigma_H ^2 = \nabla _{\theta} H ^ { \text T}  J   \nabla _{\theta} H $. This involves the covariance matrix $J$~\cite{bct_theory} and the partial derivatives ${\partial H} /  {\partial \theta }$, which in turn 
involve the partial 
derivatives of the stationary distribution with respect to $\theta$
\cite{golub1986using,schweitzer1968perturbation}. 

2. Theorem~\ref{thm2} shows that the posterior $\pi (\bar H ,x)$ is asymptotically normal around $\widetilde H ^ {(n)}$. 
A CLT also holds for the convergence of $\widetilde H ^ {(n)}$ to the true value
$H ^*$, so it is  easy to see that $ H ^ {(n)} = H^* + O _p (1/\sqrt{n})$.

Finally, we give some theoretical results for the naive CTW entropy 
estimator $\hat H _{\text{CTW}}$ described in Section~\ref{previous}. 
Although this was used in~\cite{gao2008estimating} only for binary data, 
here we prove its consistency and asymptotic normality for finite
alphabets.

\begin{theorem} \label{thm3}
    Under the assumptions of Theorem 1, the naive CTW entropy estimator is consistent, i.e.,
    \begin{equation}
        \hat H  _ {\rm CTW} = - \frac{1}{n} \log  P_ {\rm CTW}(x_1 ^ n) \to H ^ *, \; \; \text {a.s., as} \  n \to \infty,
    \end{equation}
    where $P_ {\rm CTW}(x_1 ^ n)$ is the prior predictive 
	likelihood in~{\em (\ref{eq:ppl})}.
\end{theorem}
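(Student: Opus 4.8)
The plan is to compare the CTW probability with the likelihood under the true model and reduce everything to the Shannon--McMillan--Breiman (SMB) theorem. Writing $P^*(x_1^n)=P(x_1^n\mid T^*,\theta^*)$ for the true conditional likelihood, I decompose
\[
\hat H_{\rm CTW}=-\frac1n\log P_{\rm CTW}(x_1^n)=-\frac1n\log P^*(x_1^n)+\frac1n\log\frac{P^*(x_1^n)}{P_{\rm CTW}(x_1^n)}.
\]
Since $\{X_n\}$ is a positive-ergodic, hence ergodic, finite-alphabet chain, the SMB theorem gives $-\frac1n\log P^*(X_1^n)\to H^*$ a.s. It therefore suffices to show that the correction term $\frac1n\log\frac{P^*}{P_{\rm CTW}}\to 0$ a.s., which I establish by a two-sided (sandwich) bound, each side of order $O((\log n)/n)$.

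For the upper bound on the correction term, I use that $P_{\rm CTW}(x)=P_{w,\lambda}$ equals the prior predictive likelihood~(\ref{eq:ppl}), i.e.\ the mixture $\sum_{T}\pi(T)\prod_{s\in T}P_{e,s}$ of nonnegative terms. Keeping only $T=T^*$ gives $P_{\rm CTW}(x)\ge\pi(T^*)\prod_{s\in T^*}P_{e,s}$. The estimated probabilities $P_{e,s}$ in~(\ref{eq:Pe}) are exactly the Krichevsky--Trofimov estimators, which satisfy the standard pointwise redundancy bound $-\log P_{e,s}\le-\sum_{j}a_s(j)\log\theta^*_s(j)+\frac{m-1}{2}\log M_s+c_m$ for a constant $c_m$ depending only on $m$. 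Summing over the finitely many leaves of $T^*$, using $\sum_{s}M_s=n$ and $M_s\le n$, and noting that $-\log\pi(T^*)=O(1)$, yields $\log\frac{P^*}{P_{\rm CTW}}\le\frac{(m-1)|T^*|}{2}\log n+O(1)$, so this part of the correction is $O((\log n)/n)\to 0$.

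For the lower bound on the correction term I exploit that $P_{\rm CTW}(\cdot)$ is itself a (conditional) probability measure on $A^n$, being a convex mixture of proper likelihoods. Markov's inequality then gives $\mathbb{P}\big(P_{\rm CTW}(X_1^n)/P^*(X_1^n)>n^2\big)\le n^{-2}$, and Borel--Cantelli ensures that, almost surely, $P_{\rm CTW}(X_1^n)/P^*(X_1^n)\le n^2$ for all large $n$; hence $\log\frac{P_{\rm CTW}}{P^*}\le 2\log n$ eventually, contributing $-O((\log n)/n)\to 0$. Combining the two bounds shows the correction term vanishes a.s., which completes the proof.

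The main obstacle I anticipate is the lower-bound direction: the whole argument rests on $P_{\rm CTW}$ being a genuine (sub-)probability measure over sequences, so that the Markov-inequality and Borel--Cantelli step is legitimate, and on verifying the Krichevsky--Trofimov pointwise redundancy bound with explicit, $n$-independent constants valid uniformly over the $m$ alphabet symbols and over the leaves of $T^*$. Both facts are standard but must be stated carefully; once they are in place, the remainder is a routine $O((\log n)/n)\to 0$ estimate.
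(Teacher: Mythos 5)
Your proof is correct and takes essentially the same route as the paper's: you establish $\log P_{\rm CTW}(x_1^n)=\log P(x_1^n\mid T^*,\theta^*)+O(\log n)$ a.s., with the lower bound on $P_{\rm CTW}$ coming from keeping the $T^*$ term of the mixture plus the Krichevsky--Trofimov redundancy bound (the content of the paper's cited Theorem~3.1 of~\cite{bct_theory}), and the other direction via Markov's inequality and Borel--Cantelli, which is precisely the ``Barron's lemma'' the paper cites from~\cite{kontoyiannis1997second}; the ergodic/SMB theorem then finishes the argument exactly as in the paper. The only difference is that you derive inline the two ingredients the paper handles by citation.
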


\begin{theorem} \label{thm4}
    Under the assumptions of Theorem 1, the naive CTW entropy estimator is asymptotically normal: As $n \to \infty $,
    \begin{equation}
       \sqrt {n} \left ( \hat H  _ {\rm CTW} - H ^* \right ) \to Z \sim \mathcal N (0, \sigma _ {\rm CTW} ^ 2), \; \; \text {a.s.}
    \end{equation}
\end{theorem}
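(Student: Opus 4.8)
The plan is to write $-\log P_{\rm CTW}(x_1^n)$ as the ideal codelength under the true model plus a redundancy term, and to show that only the former contributes at the $\sqrt n$ scale. Suppressing the initial context $X_{-D+1}^0$ and writing $s_i$ for the leaf of $T^*$ matching the suffix of $X_{i-D}^{i-1}$, the ideal codelength is the additive functional
\begin{equation}
-\log P(X_1^n\mid T^*,\theta^*)=\sum_{i=1}^n \big(-\log\theta^*_{s_i}(X_i)\big),
\label{eq:ideal4}
\end{equation}
so I would start from the decomposition
\begin{equation}
\sqrt n\,(\hat H_{\rm CTW}-H^*)=\sqrt n\Big(-\tfrac1n\log P(X_1^n\mid T^*,\theta^*)-H^*\Big)+\tfrac{1}{\sqrt n}R_n,
\label{eq:decomp4}
\end{equation}
where $R_n=\log[P(X_1^n\mid T^*,\theta^*)/P_{\rm CTW}(x_1^n)]$ is the pointwise redundancy of the CTW mixture relative to the true model, and where $P_{\rm CTW}(x_1^n)=\sum_T \pi(T)\,P_{\rm KT}(x_1^n\mid T)$ with $P_{\rm KT}(x_1^n\mid T)=\int_\theta P(x_1^n\mid T,\theta)\pi(\theta\mid T)\,d\theta=\prod_{s\in T}P_{e,s}$ the marginal likelihood of model $T$.

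For the leading term in \eqref{eq:decomp4}, I would invoke the central limit theorem for additive functionals of ergodic Markov chains. Under the positive-ergodic assumption each summand $-\log\theta^*_{s_i}(X_i)$ is a bounded function of the sliding block $X_{i-D}^i$, so \eqref{eq:ideal4} is a bounded additive functional of the (block) chain whose stationary mean is exactly $H^*$; the Markov CLT \cite{chung1967markov,meyn2012markov} then gives
\begin{equation}
\sqrt n\Big(-\tfrac1n\log P(X_1^n\mid T^*,\theta^*)-H^*\Big)\xrightarrow{\mathcal D}\mathcal N(0,\sigma^2_{\rm CTW}),
\label{eq:varent4}
\end{equation}
with $\sigma^2_{\rm CTW}$ the asymptotic variance of \eqref{eq:ideal4}, i.e.\ the varentropy of the chain. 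This already identifies the limiting law and its variance.

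The crux is to show $R_n/\sqrt n\to0$, and I expect this to be the main obstacle, since the consistency result of Theorem~\ref{thm3} only yields the far weaker $R_n=o(n)$. Bounding $R_n$ from above is routine: because $P_{\rm CTW}(x_1^n)\ge \pi(T^*)P_{\rm KT}(x_1^n\mid T^*)$ and the Krichevsky--Trofimov redundancy of the true model is at most $\tfrac{|T^*|(m-1)}{2}\log n+O(1)$, one obtains $R_n\le \tfrac{|T^*|(m-1)}{2}\log n+O(1)$. The harder direction is the lower bound on $R_n$, i.e.\ ruling out that the full mixture $P_{\rm CTW}$ exceeds $P(X_1^n\mid T^*,\theta^*)$ by more than a polynomial factor; here I would use the posterior concentration on $T^*$ (Theorem~3.6 of \cite{bct_theory}) together with a comparison of each competing model's marginal likelihood to the true-model likelihood to conclude $|R_n|=O(\log n)$ almost surely. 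Then $R_n/\sqrt n=O((\log n)/\sqrt n)\to0$ a.s., and Slutsky's theorem applied to \eqref{eq:decomp4} upgrades \eqref{eq:varent4} to the stated Gaussian limit with variance $\sigma^2_{\rm CTW}$.
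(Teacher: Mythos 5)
Your overall architecture is exactly the paper's: decompose $-\log P_{\rm CTW}(x_1^n)$ into the ideal codelength $-\log P(x_1^n\mid T^*,\theta^*)$ plus a pointwise redundancy $R_n$, apply the Markov-chain CLT \cite{chung1967markov,meyn2012markov} to the bounded additive functional $\sum_i-\log\theta^*_{s_i}(X_i)$ (identifying $\sigma_{\rm CTW}^2$ as its asymptotic variance, the varentropy), and then show $|R_n|=O(\log n)$ a.s.\ so that $R_n/\sqrt n\to 0$. Your upper bound on $R_n$, via $P_{\rm CTW}(x_1^n)\ge\pi(T^*)P_{\rm KT}(x_1^n\mid T^*)$ and the Krichevsky--Trofimov redundancy bound, is exactly the lower bound on $P_{\rm CTW}$ given in Theorem~3.1 of \cite{bct_theory}, which is one of the two ingredients the paper cites.

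Where you diverge is the other direction --- ruling out that the mixture overshoots the true likelihood by more than a polynomial factor --- and there your sketch has a genuine hole. The paper closes this in one line with ``Barron's lemma'' \cite{kontoyiannis1997second}: since $P_{\rm CTW}$ is a legitimate probability assignment, $E\bigl[P_{\rm CTW}(X_1^n)/P(X_1^n\mid T^*,\theta^*)\bigr]\le 1$, so Markov's inequality and Borel--Cantelli give $\log\bigl[P_{\rm CTW}(X_1^n)/P(X_1^n\mid T^*,\theta^*)\bigr]\le 2\log n$ eventually a.s., i.e.\ $R_n\ge-2\log n$, using no structure of CTW whatsoever. Your proposed substitute --- posterior concentration on $T^*$ plus ``a comparison of each competing model's marginal likelihood to the true-model likelihood'' --- is not yet an argument: concentration only reduces the problem to bounding $\log\bigl[P_{\rm KT}(x_1^n\mid T^*)/P(x_1^n\mid T^*,\theta^*)\bigr]$ from above, and since a Bayesian marginal likelihood can exceed the likelihood at the \emph{true} parameter, you still need almost-sure control of the maximum-likelihood overshoot $\sup_\theta\log\bigl[P(x_1^n\mid T^*,\theta)/P(x_1^n\mid T^*,\theta^*)\bigr]$. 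That is a law-of-iterated-logarithm-type estimate (it is $O(\log\log n)$ a.s., not $O(1)$) which your sketch silently assumes; it is precisely the content that Barron's lemma delivers for free. So either replace that step by Barron's lemma --- recovering the paper's proof --- or supply the LIL bound on the empirical transition counts together with a Taylor expansion of the log-likelihood, which is workable but considerably heavier than what you have written.
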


\begin{proof}
    Denoting the likelihood of the data $P (x_1 ^n | T ^*, \theta ^ *)$, both theorems follow immediately from the observation that,
    \begin{equation}
        \log P _ {\text {CTW}}(x_1 ^ n ) = \log P (x_1 ^n | T ^*, \theta ^ *) + O (\log n), \; \; \text{a.s.},
    \end{equation}
    which follows directly from the explicit upper and lower bounds on $P_ {\text{CTW}}(x_1 ^ n)$ given as Theorem~3.1 in~\cite{bct_theory} and ``Barron's lemma" in~\cite{kontoyiannis1997second}. The ergodic theorem and the CLT for Markov chains~\cite{chung1967markov,meyn2012markov} then directly give Theorems~\ref{thm3} and~\ref{thm4}, respectively. Note that \textit{positive} ergodicity is not strictly needed for these results, we just need the ergodic theorem and CLT to hold respectively; see~\cite{chung1967markov,meyn2012markov} for general~conditions. 
\end{proof}

\section{Experimental Results} \label{results}

In this section, the BCT entropy estimator 
(with $D=10$) is compared with 
state-of-the-art approaches
as identified by~\cite{verdu2019empirical} and summarised in Section~\ref{previous}.
The BCT estimator is found to give the most reliable estimates 
on a variety of simulated and real-world
data. Moreover, compared 
to most existing approaches that give simple point estimates 
(sometimes with confidence intervals), 
it has the additional advantage 
that it provides the entire posterior distribution
$\pi(\bar{H}|x)$. Throughout this section, 
all entropies are expressed in {\em nats}.

\medskip

\noindent
{\bf A ternary chain. } We consider
$n=1000$ observations generated from the 5th order,
ternary chain in the example shown in Section~\ref{previous_bct};
see~\cite{branch_arxiv} for the complete 
specification of the associated parameters. 
The entropy rate of this chain is~$\bar{H}=1.02$. 
In Figure \ref{post_evolution} we show estimates of the
prior distribution $\pi(\bar{H})$,
and of the posterior~$\pi(\bar{H}|x)$
based on $n=100$ and $n=1000$ observations.
With $n=1000$, the posterior is
close to a Gaussian  
with mean $\mu=1.005$ and standard deviation 
$\sigma = 0.017$. 
For each histogram $N=10^5$ Monte Carlo samples were 
used, and in each
case (and in all subsequent examples), the vertical 
axis of the histograms shows the frequency of the bins
in the Monte Carlo sample.

\begin{figure}[!ht]
\vspace*{-0.1 cm}
\begin{subfigure}{0.32 \linewidth}
\hspace*{-0.1in}
\includegraphics[width= 1.09 \linewidth]{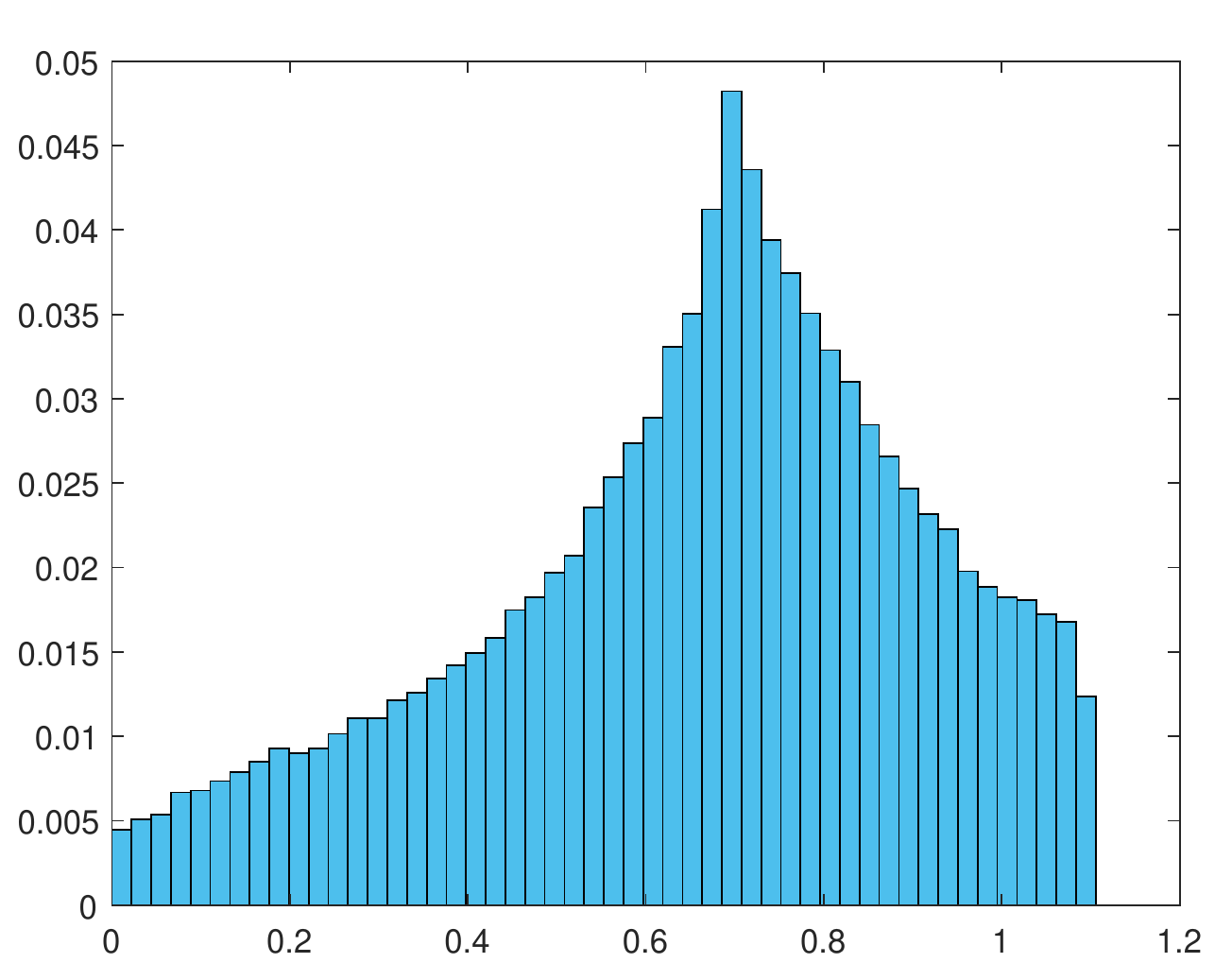}
\vspace*{-0.55 cm}
\caption{prior}
\end{subfigure}
\begin{subfigure}{0.32 \linewidth}
\hspace*{-0.02in}
\vspace*{-0.08 cm}
 \includegraphics[width= 1.10 \linewidth]{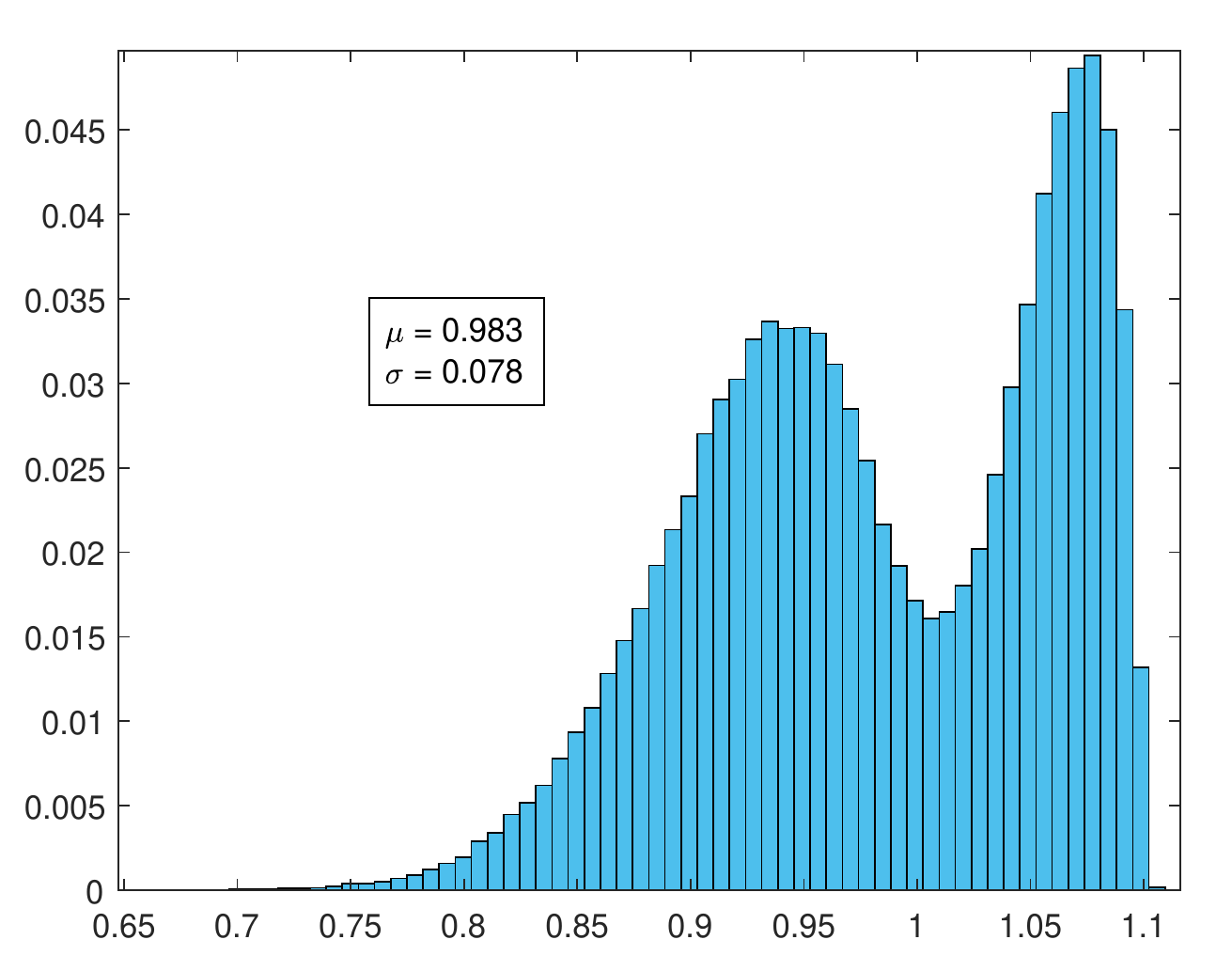}
\vspace*{-0.52 cm}
\caption{$n=100$}
\end{subfigure}
\begin{subfigure}{ 0.32 \linewidth}
\hspace*{0.06in}
 \includegraphics[width= 1.05 \linewidth]{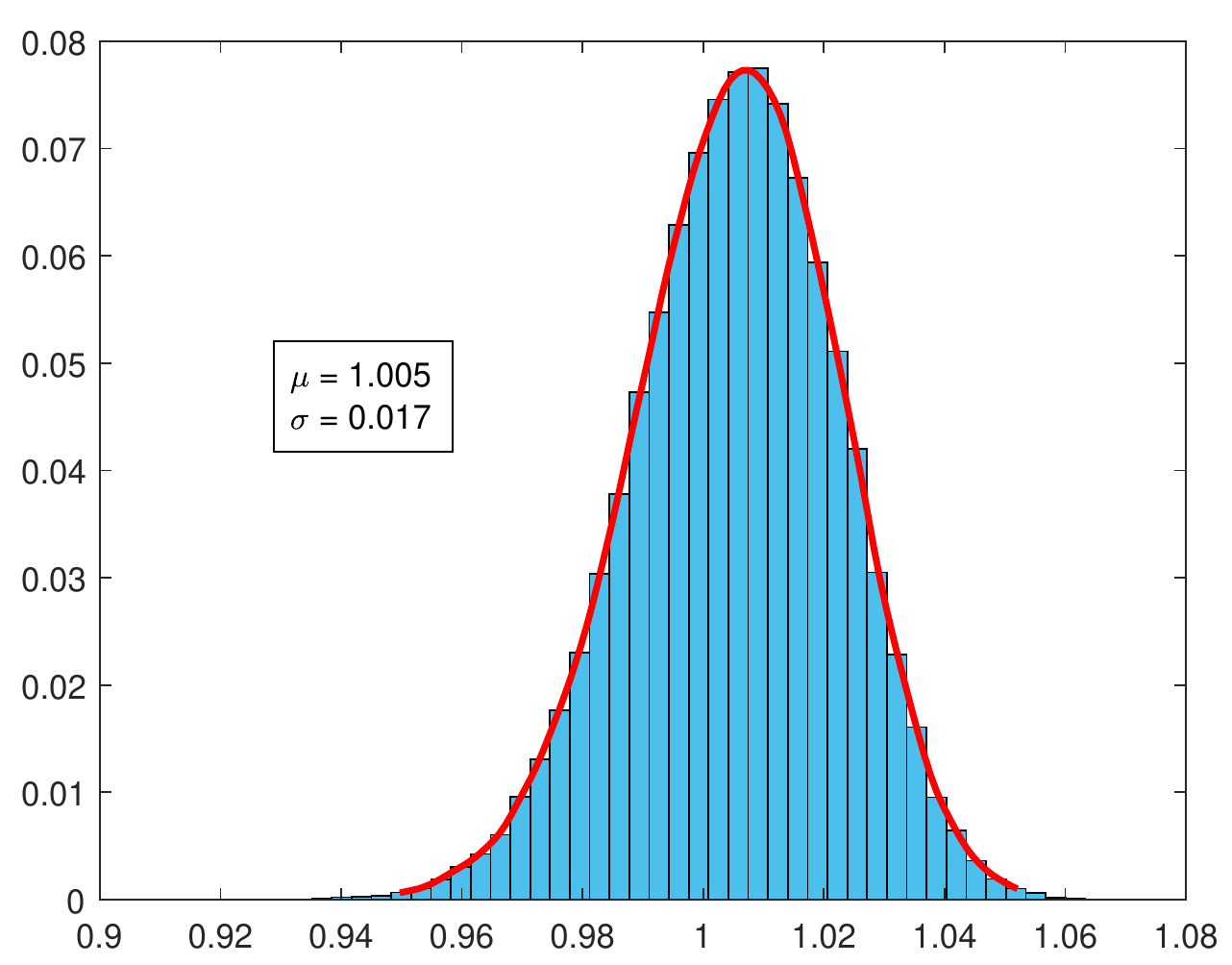}
\vspace*{-0.55 cm}
\caption{$n=1000$}
\end{subfigure}
\vspace*{-0.1 cm}
\caption{Prior $\pi(\bar{H})$ and posterior 
$\pi(\bar{H}|x)$ of the 
entropy rate~$\bar{H}$ with $n=100$ and $n=1000$ observations 
$x$.}
\label{post_evolution}
\end{figure}

Figure \ref{tern_h_plots} shows the performance
of the BCT estimator compared with the other
estimators described above, as a function
of the length $n$ of the available observations~$x$. 
For BCT we plot the posterior mean. For the plug-in
we plot estimates with block-lengths $k=5,6,7$. It is easily observed that the BCT estimator outperforms all the alternatives, and converges faster and closer to the true value of $ \bar H  $.

\begin{figure}[!h]
\centering
\vspace*{-0.3 cm}
\includegraphics[width= 0.77 \linewidth]{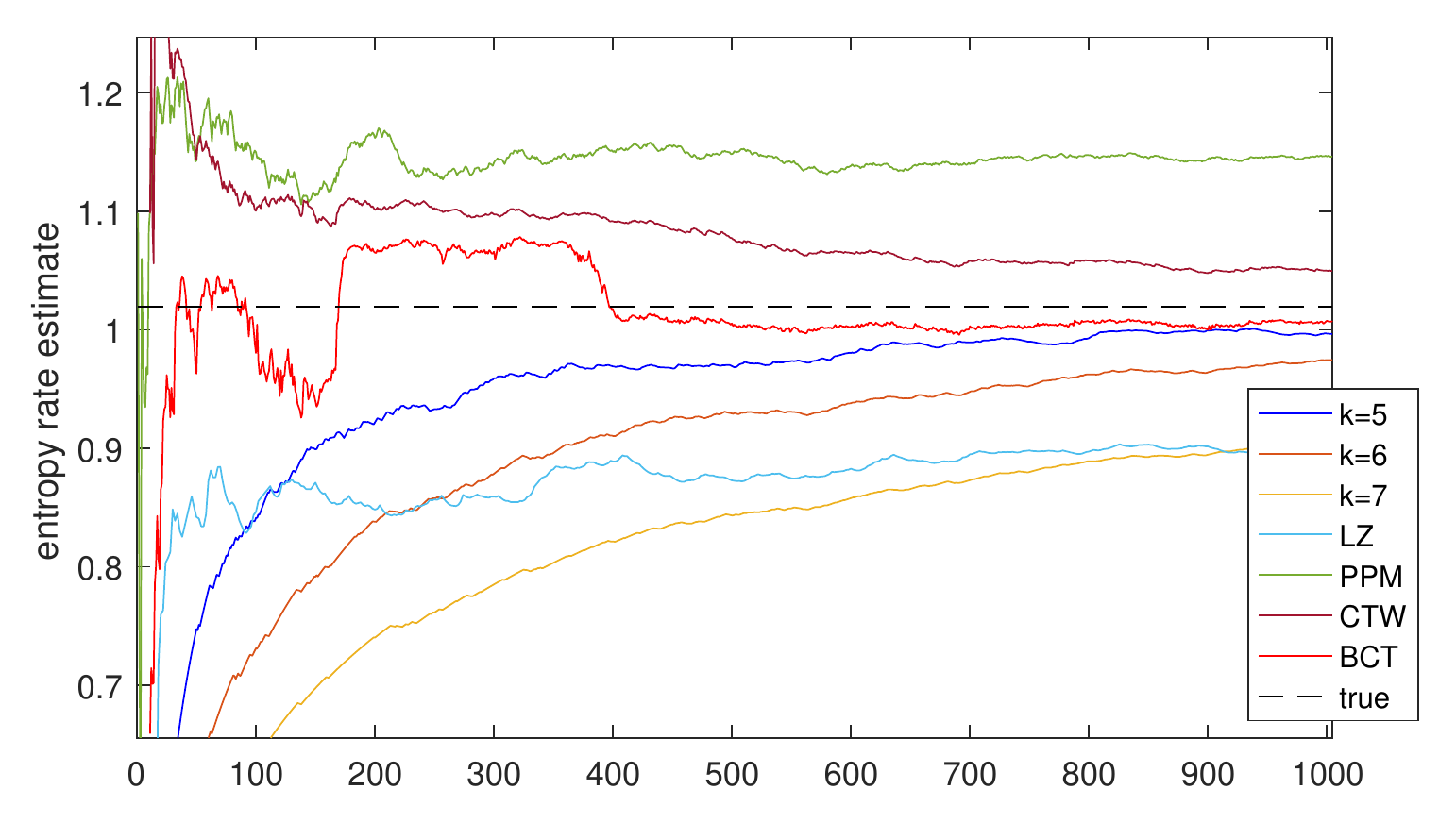}
\vspace*{-0.35 cm}
\caption{Entropy rate estimates for the 5th order ternary chain,
as the number of observations increases.}
\label{tern_h_plots} 
\vspace*{-0.55 cm}
\end{figure}

\noindent
{\bf A third order binary chain. } Here, we consider $n=1000$ observations 
generated from an example of a third order binary chain 
from \cite{berchtold2002mixture}. The underlying model is the 
complete binary tree of depth $3$ pruned at node $s=11$;
the tree model $T$ and the parameter values
$\theta=\{\theta_s;s\in T\}$ are given in~\cite{branch_arxiv}. 
The entropy rate of this chain is $\bar{H}=0.4815$. 
Figure~\ref{y3_pots} shows the performance 
of all estimators,
where 
BCT (using the posterior mean again) is found to have the best performance.
The histogram of the BCT posterior after $n=1000$ observations is
close to a Gaussian with
mean $\mu = 0.4806$ and standard deviation $\sigma = 0.0405$.

\begin{figure}[!h]
\centering
\vspace*{-0.25 cm}
\includegraphics[width= 0.75 \linewidth]{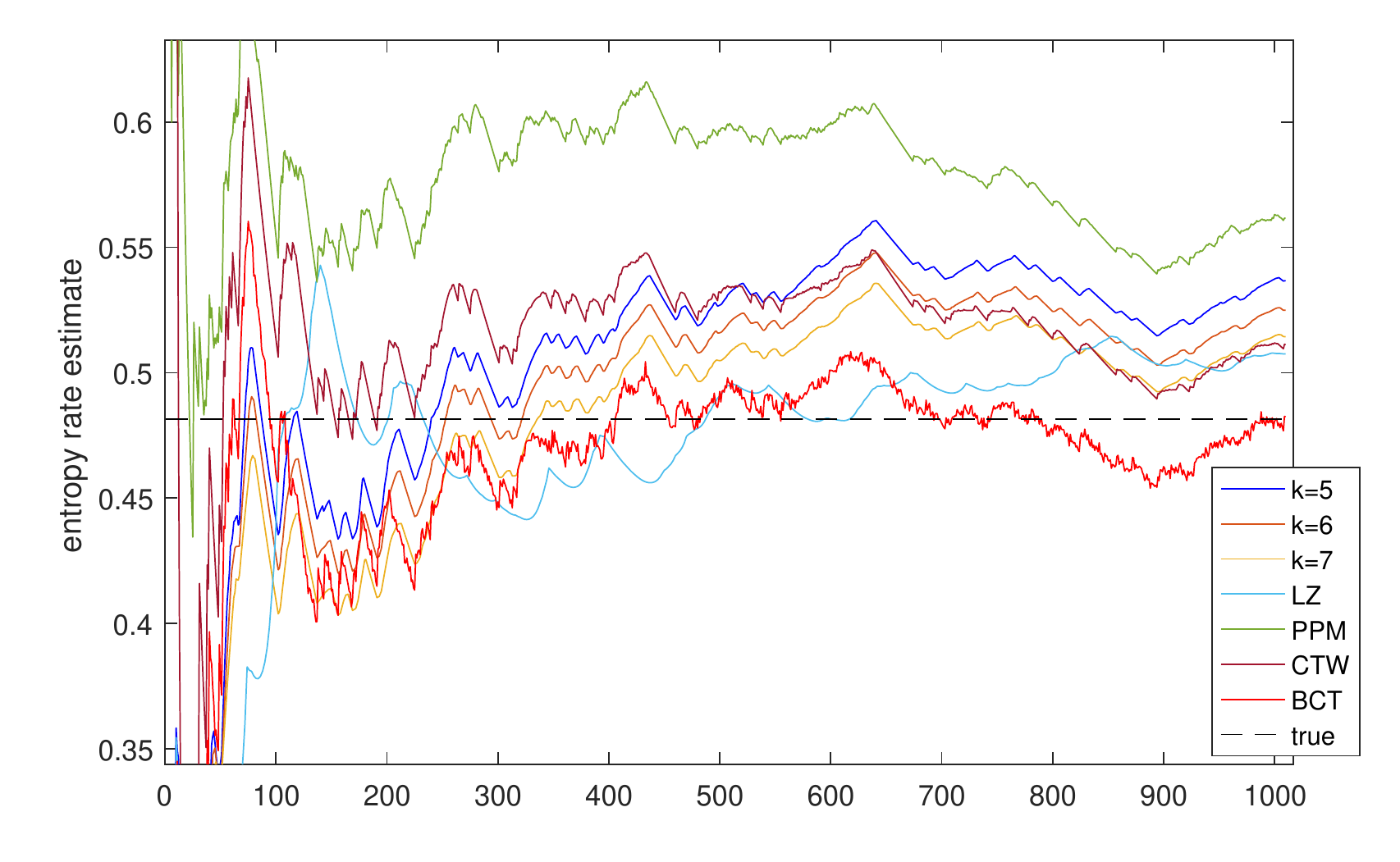}
\vspace*{-0.35 cm}
\caption{Entropy rate estimates for the third order binary chain,
as the number of observations increases.}
\label{y3_pots} 
\vspace*{-0.25 cm}
\end{figure}

% \vspace*{-0.4in}

\noindent
{\bf A bimodal posterior. } 
We examine a simulated time series
from \cite{our},
consisting of
$n=1450$ observations generated from 
a $3$rd order chain with 
alphabet size $m=6$ and with the property that each $X_n$ depends 
on past observations only via~$X_{n-3}$.
The complete
specification of the chain can be found in~\cite{our,branch_arxiv};
its entropy rate is $\bar{H}=1.355$. 
An interesting aspect of this data set is that 
the model posterior is bimodal, with one mode
corresponding to the empty tree (describing i.i.d.\ observations)
and the other consisting of tree models of depth~3.
As shown in Figure \ref{bimodal_h}, the posterior
of the entropy rate is also bimodal, 
with two approximately-Gaussian modes 
corresponding to the two model posterior modes.

The dominant mode is the one corresponding to models of depth 3; 
it has mean $\mu_1=1.406$, standard deviation $\sigma_1 =0.031$, 
and relative weight $w_1 = 0.91$. The second mode 
corresponding to the empty tree has mean $\mu_2 = 1.632$, standard 
deviation $\sigma_2 = 0.020$, and a much smaller weight 
$w_2 =1 -w_1 =  0.09$. In this case,
the mode of $\pi(\bar{H}|x)$ gives a more reasonable 
choice for a point estimate than the posterior mean.
Like in the previous two examples,
the BCT estimator 
performs better than most benchmarks.
%as illustrated in~\cite{branch_arxiv}. 

\begin{figure}[!h]
\vspace*{-0.25 cm}
\begin{subfigure}{0.49 \linewidth}
 \includegraphics[width= 1 \linewidth]{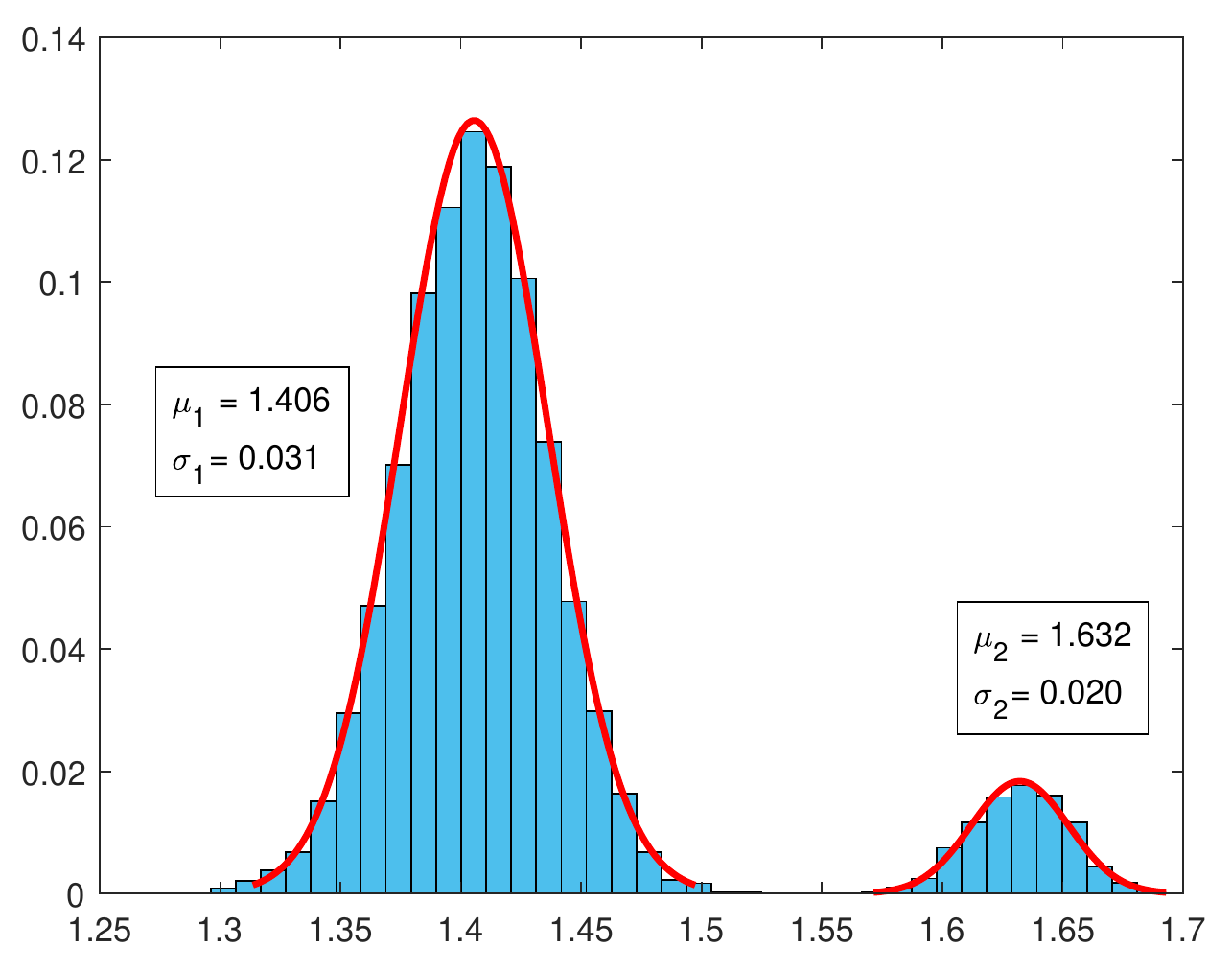}
\vspace*{-0.6 cm}
\caption{bimodal example}
\vspace*{0.02 cm}
\label{bimodal_h}
\end{subfigure}
\begin{subfigure}{0.49 \linewidth}
 \includegraphics[width= 1 \linewidth]{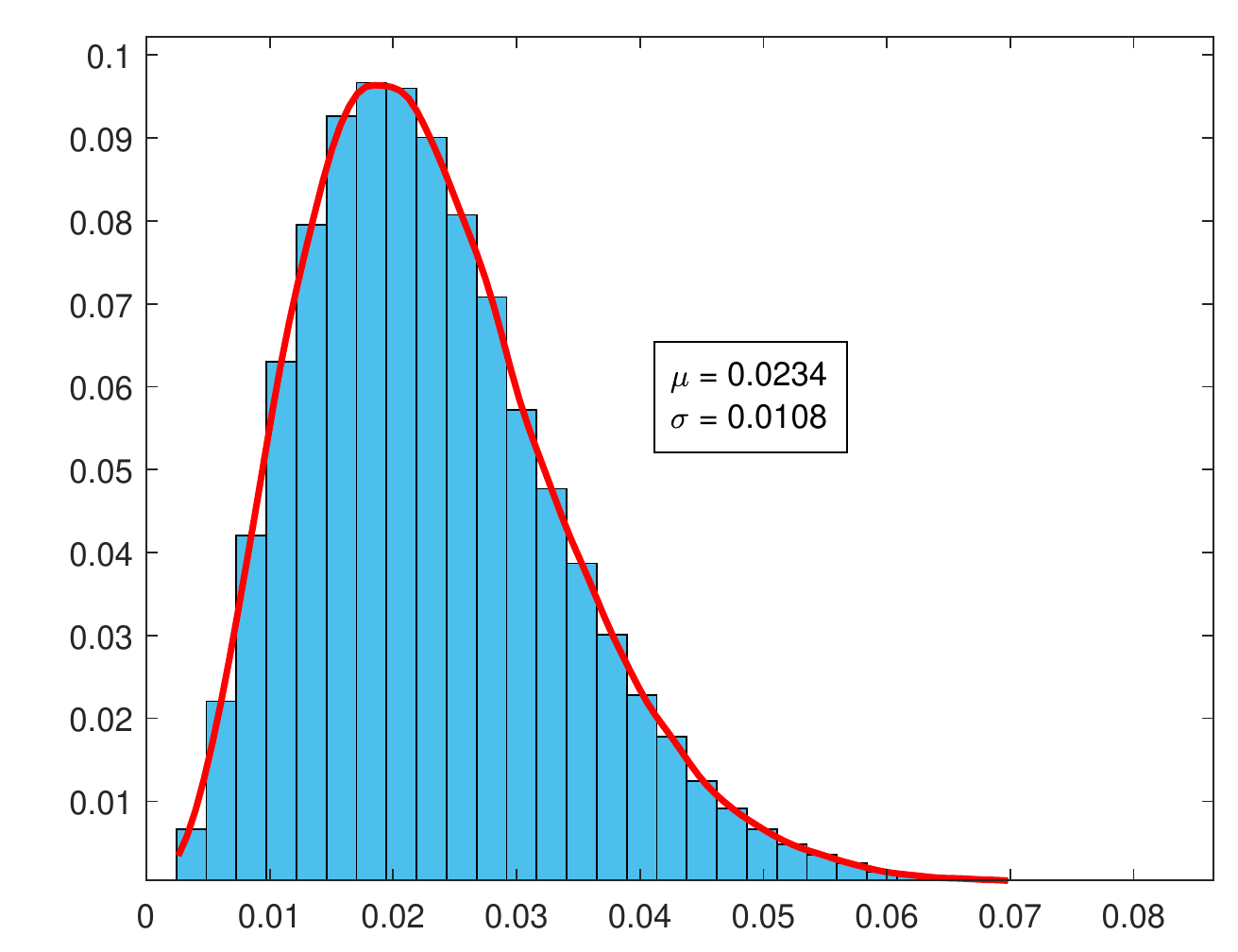}
\vspace*{-0.6 cm}
\caption{spike train}
\label{spike_h}
\end{subfigure}
\vspace*{-0.15 cm}
\caption{Histograms of the
posterior distribution $\pi(\bar{H}|x)$ of the entropy rate,
constructed from $N=10^5$ i.i.d.\ samples.}
\vspace*{-0.20 cm}
\end{figure}

\noindent
{\bf Neural spike train. } 
We consider $n=1000$ binary observations from a spike train recorded from a single neuron in region V4 of a monkey's brain. The BCT posterior is 
shown in Figure \ref{spike_h}: Its mean is $\mu = 0.0234$, its standard deviation 
is $\sigma = 0.0108$, and is skewed to the right. 

This dataset is 
the first part of a long spike train of length $n=3,919,361$ 
from \cite{gregoriou2009high,gregoriou2012cell}.
Although there is no ``true'' value of the entropy rate here,
for the purposes of comparison we use the estimate obtained
by the naive CTW estimator (identified as the most effective method
by \cite{gao2008estimating} and \cite{verdu2019empirical})
when all $n=3,919,361$ samples are used, giving
$\bar{H}=0.0241$. 
The resulting estimates for all 
five methods (with the posterior 
mean given for BCT) are summarised in Table~\ref{spike_table}, verifying again that BCT outperforms all the other methods. For the plug-in estimator (and in all subsequent examples), we only show the best block-lengths~$k$.

\vspace*{-0.02 cm}

\begin{table}[!h]
{\small
\begin{tabular}{ccccccc}
\toprule 
 & ``True" & BCT & CTW & PPM &LZ  & $k=5$   \\
\midrule
$\widehat H $ & 0.0241 & \bf{0.0234} & 0.0249 & 0.0360 & 0.0559 & 0.0204  \\
\bottomrule
\end{tabular}
}
\caption{Entropy rate estimates for the neural spike train.}
\label{spike_table}
\vspace*{-0.1 cm}
\end{table}

\noindent
{\bf Financial data. } Here, we consider $n=2000$ observations from the 
financial dataset~F.2 of \cite{our}. This consists of tick-by-tick price 
changes of the Facebook stock price, quantised to three values:
$x_i =0$ if the price goes down, $x_i=1$ if it stays the same,
and $x_i=2$ if it goes up.
The BCT entropy-rate posterior is shown in Figure~\ref{fb_h_hist}:  
It has mean $\mu = 0.921$, and standard deviation $\sigma = 0.028$. 

Once again, as the ``true'' value of the entropy rate we take
the estimate produced by the naive CTW estimator on a 
longer sequence with $n=10^4$ observations,
giving $\bar{H}=0.916$. The results of all five estimators
are summarised in Table~\ref{table_fb},
where for the BCT estimator we once again give the posterior mean, which is again found to outperform the alternatives.

\begin{figure}[!ht]
\begin{subfigure}{0.49 \linewidth}
 \includegraphics[width= 1 \linewidth]{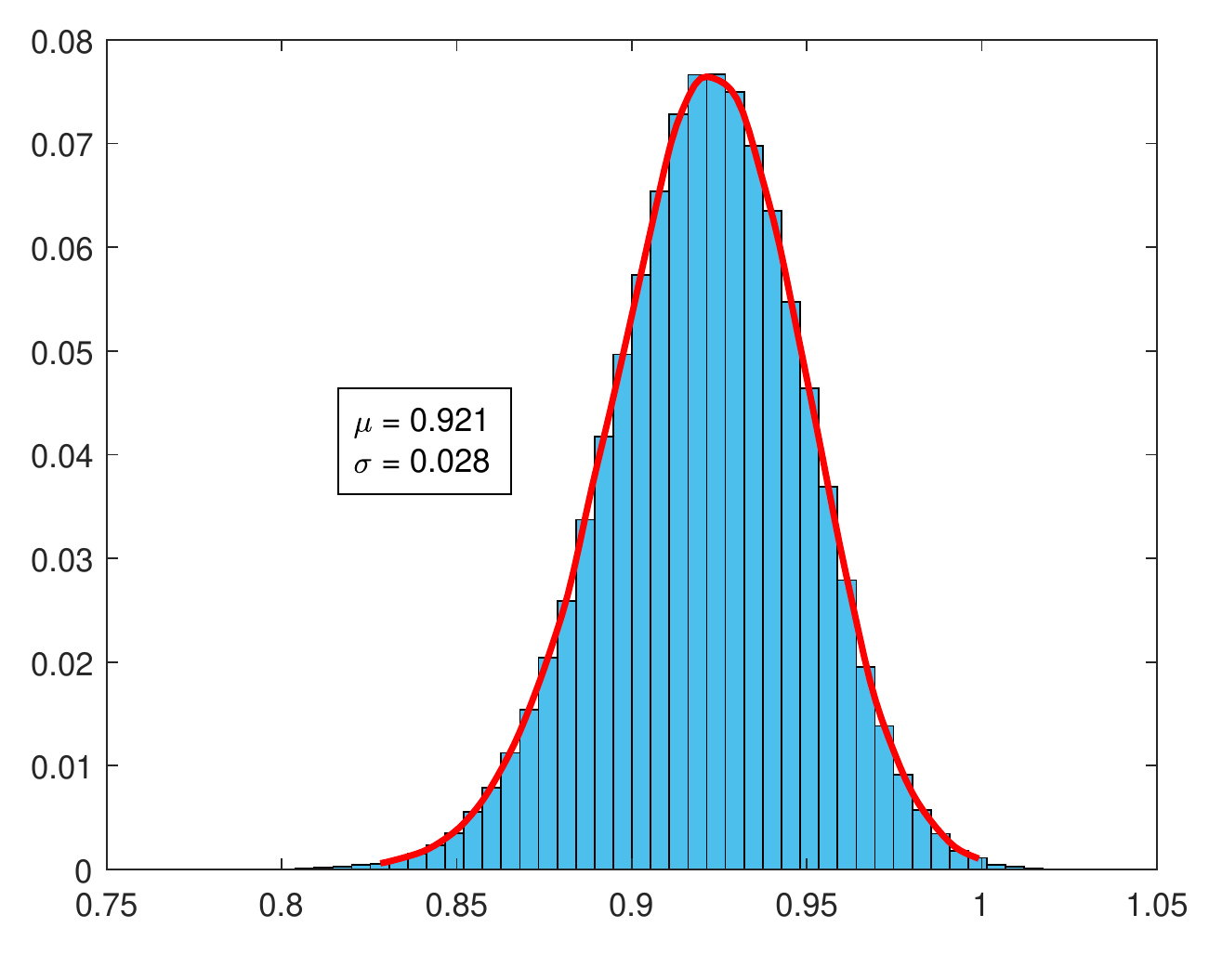}
\vspace*{-0.75 cm}
\caption{financial dataset}
\label{fb_h_hist}
\end{subfigure}
\begin{subfigure}{0.49 \linewidth}
 \includegraphics[width= 1 \linewidth]{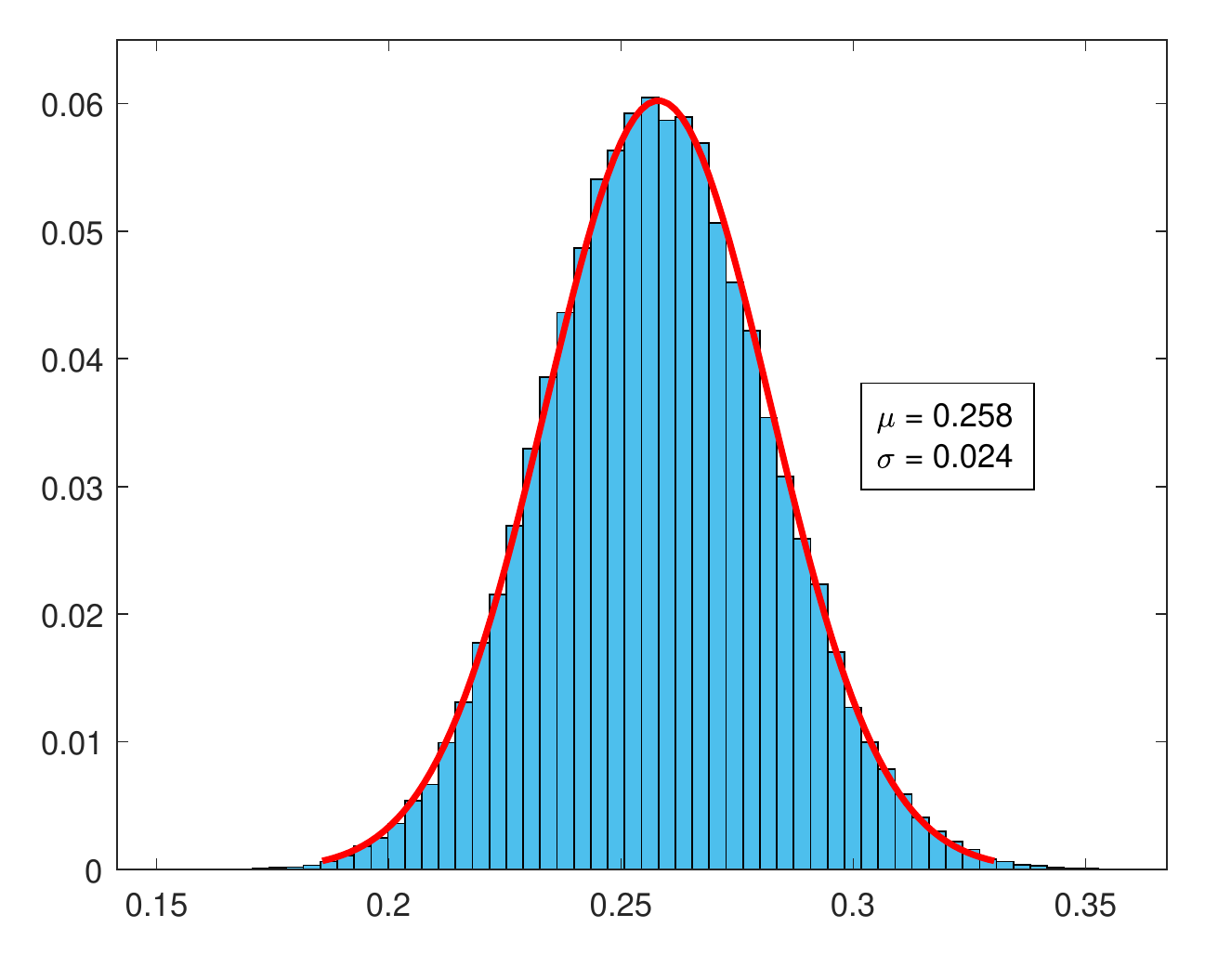}
\vspace*{-0.75 cm}
\caption{pewee birdsong}
\label{pewee_h_hist}
\end{subfigure}
\vspace*{-0.1 cm}
\caption{Histograms of the posterior 
distribution $\pi(\bar{H}|x)$ of the entropy rate,
constructed from $N=10^5$ i.i.d.\ samples.}
\vspace*{-0.0 cm}
\end{figure}

\begin{table}[!ht]
\begin{tabular}{ccccccccc}
\midrule 
 & ``True" & BCT & CTW & PPM &LZ & $k=5$ & $k=6$    \\
\midrule
$\widehat H $ & 0.916 & \bf{0.921} & 0.939 & 1.049 & 0.846 & 0.930 & 0.907   \\
\midrule
\end{tabular}
\vspace*{-0.15 cm}
\caption{Entropy rate estimates for the financial data set.}
\label{table_fb}
\vspace*{-0.5 cm}
\end{table}

\noindent
{\bf Pewee birdsong. } The last data set
examined is a time series $x$ describing
the twilight song of the wood pewee 
bird \cite{craig1943song,sarkar2016bayesian}.
It consists of $n=1327$ observations from an 
alphabet of size $m=3$. 
The BCT posterior is shown in Figure \ref{pewee_h_hist}: 
It is approximately Gaussian with mean $\mu = 0.258$ and standard 
deviation $\sigma = 0.024$. The fact that the standard deviation
is small suggests ``confidence'' in the 
resulting estimates, which is important because here
(as in most real applications) 
there is no knowledge of a ``true'' underlying value. 
Table~\ref{pewee_table} shows all the resulting estimates;
the posterior mean is shown for the BCT estimator.

\begin{table}[!ht]
\begin{tabular}{ccccccccc}
\toprule 
& BCT & CTW & PPM &LZ & $k=5$ & $k=10$ & $k=15$  \\
\midrule
$\widehat H $ & 0.258 &  0.278 & 0.318 & 0.275 & 0.467 & 0.336 & 0.272 \\
\bottomrule
\end{tabular}
\vspace*{-0.05 cm}
\caption{Entropy rate estimates for the pewee song data.}
\label{pewee_table}
\vspace*{-0.35 cm}
\end{table}

%\medskip

\section{Concluding Remarks}

%\noindent
%{\bf Summary. }
The main conclusion from the results on the six data
sets examined in the experimental section is that the BCT entropy estimator
gives the most accurate and reliable results among the
five estimators considered. In addition to the fact that
the BCT point estimates typically outperform those produced 
by other methods, the BCT estimator is accompanied by the
entire posterior distribution $\pi(\bar{H}|x)$ of the
entropy rate, induced by the observations~$x$. 
As usual, this
distribution can be used to quantify the uncertainty 
in estimating $\bar{H}$, and it contains significantly 
more information than simple point estimates and their
associated confidence intervals.

In closing, we note a few possible directions for future research 
extending this work. First, the proposed entropy estimator could 
be employed for the estimation of mutual information and directed 
information rates between processes, which come with important 
applications, as for example in testing for temporal 
causality~\cite{kontoyiannis2016estimating}. Also, as the 
BCT framework has recently been extended to the setting of 
real-valued time series~\cite{bct_ar}, it would be interesting to 
extend our methods for entropy estimation in the real-valued case.

% \vspace*{-0.3 cm}
\newpage

%\small

%\bibliography{my_ref_fin}

\end{document}